\newtheorem{proposition}{Proposition}
\newtheorem{definition}{Definition}
\newtheorem{example}{Example}
\newcommand{\pX}{\mathbf{x}}
\newcommand{\pbigX}{\mathbf{X}}
\newcommand{\pY}{\mathbf{y}}
\newcommand{\pU}{\mathbf{u}}
\newcommand{\pbigU}{\mathbf{U}}
\newcommand{\pv}{\mathbf{v}}
\newcommand{\reals}{\mathbb{R}}
\newcommand{\naturals}{\mathbb{N}}
\newcommand{\safeset}{\mathcal{C}}
\DeclareMathOperator*{\argmin}{argmin}
\DeclareMathOperator*{\argmax}{argmax}
\title{\LARGE \bf
Feasibility-Guided Safety-Aware Model Predictive Control for Jump Markov Linear Systems  
}
\author{
Zakariya Laouar, Qi Heng Ho, Rayan Mazouz, Tyler Becker, Zachary N. Sunberg
\thanks{Authors are with the department of Aerospace Engineering Sciences at the University of Colorado Boulder, CO, USA {\tt\small \{zakariya.laouar@colorado.edu\}}.
}
}
\begin{document}

      
\maketitle
\thispagestyle{empty}
\pagestyle{empty}

\begin{abstract}
In this paper, we present a controller framework that synthesizes control policies for Jump Markov Linear Systems subject to stochastic mode switches and imperfect mode estimation. Our approach builds on safe and robust methods for Model Predictive Control (MPC), but in contrast to existing approaches that either optimize without regard to feasibility or utilize soft constraints that increase computational requirements, we employ a safe and robust control approach informed by the feasibility of the optimization problem. We formulate and encode finite horizon safety for multiple model systems in our MPC design using Control Barrier Functions (CBFs). When subject to inaccurate hybrid state estimation, our feasibility-guided MPC generates a control policy that is maximally robust to uncertainty in the system's modes. We evaluate our approach on an orbital rendezvous problem and a six degree-of-freedom hexacopter under several scenarios and benchmarks to demonstrate the utility of the framework. Results indicate that the proposed technique of maximizing the robustness horizon, and the use of CBFs for safety awareness, improve the overall safety and performance of MPC for Jump Markov Linear Systems.
\end{abstract}






\section{Introduction}



Advances in estimation and control techniques have enabled many applications of cyber-physical systems such as unmanned aerial systems (UAS), including military, industrial, and commercial  \cite{canis2015unmanned}. As these systems are increasingly deployed alongside humans in the real world, safety and reliability becomes increasingly important. The systems that operate in these environments must contend with the inevitable uncertainties present in the real world, which can be continuous or discrete in nature \cite{laurenti2020formal}. In this work, we consider Jump Markov Linear Systems (JMLS), a class of multiple model systems described by a discrete-time linear dynamic model whose parameters depend on discrete modes that the system switches between \cite{Blom1988}. Many real-world systems have imperfect models of the uncertain mode switches and imprecise sensors to detect these switches. This makes the problem of controlling these systems challenging due to the need to provide safety guarantees while accounting for the possibility of mode switches.  This paper focuses on this challenge and develops a principled framework for safety-aware control of JMLS subject to mode switches.


\begin{figure}[!t]
  \centering
    \includegraphics[width=0.7\linewidth]{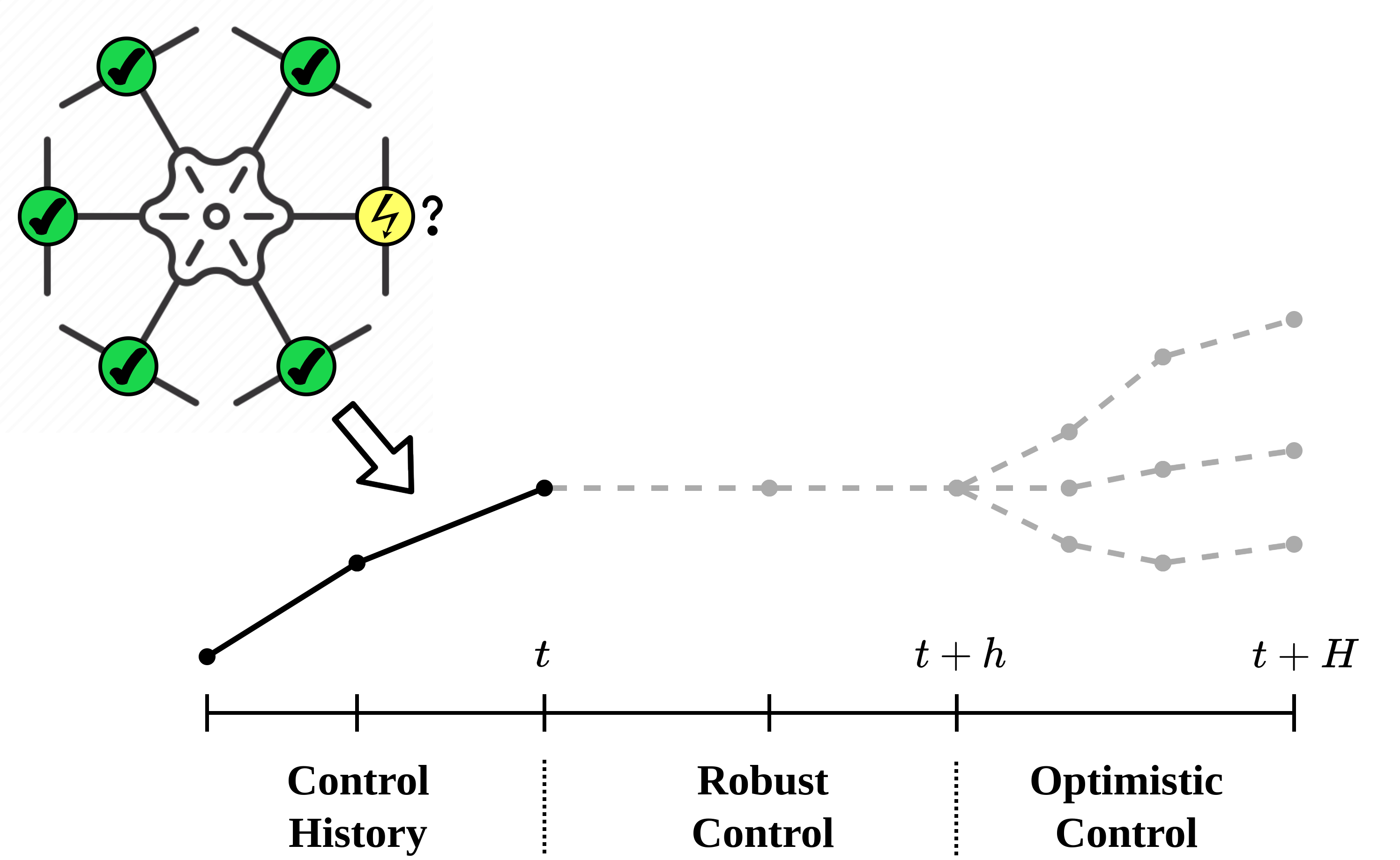}
    \caption{\small Feasibility-guided MPC for a multiple-model hexacopter subject to partially observable stochastic rotor faults. Our approach computes an adaptive robust control (consensus) horizon at each planning step to improve robustness and feasibility. 
    }
    \label{fig:robust}
\end{figure}

\begin{example}
    \label{ex:mineshaft inspection}
    Consider a mineshaft inspection scenario, where a hexacopter is tasked to inspect a narrow mineshaft while avoiding collisions with obstacles. The system is subject to partially observable stochastic rotor faults. At any time, any one of its six rotors may fail with low probability, causing its dynamics to change and the system to switch mode.
\end{example}
The hexacopter can be modeled as a JMLS subject to stochastic mode switches. Due to imperfect mode switching models and sensors, when a rotor has failed, the system may require some time to detect it. To ensure that the task is completed successfully and safely, a controller needs to ensure that the system is safe even if a mode switch occurs and is undetected. A typical method is to compute a solution robust to the worst case \cite{bemporad2007robust}. However, in many scenarios, optimizing for the worst case over the entire control horizon can lead to an infeasible optimization problem; There may not be a single control trajectory that is safe for all possible modes. Therefore, the system should determine the appropriate horizon of controller robustness, known as a \emph{consensus horizon}, by accounting for problem feasibility, as depicted in Fig. \ref{fig:robust}. In this work, we aim to design a controller that accounts for problem feasibility while being robust to stochastic mode jumps and mode switch detection delays.





\subsection{Related Work}
The general problem of planning under stochastic uncertainty can be modeled as a partially observable Markov decision process (POMDP) \cite{cassandra1998survey}. 
Unfortunately, solving a POMDP exactly is intractable due to the curses of dimensionality and history. Approximate solutions are often used in practice \cite{koch_dmu, sunberg2018online, lim2022generalized, gupta2022intention}, but do not perform well for large or continuous action spaces, which is a characteristic of multiple model systems.

We address safety-critical control for JMLS by proposing a Model Predictive Control (MPC) based approach. Scenario or multi-stage MPC approaches have been used to control systems under uncertainty, through sampling of scenarios \cite{micheli2022scenario}. These methods include a robustness horizon up to which scenarios are restricted to share the same action at each step in the planning horizon. However, a major limitation of scenario or particle approaches for controlling multiple model systems manifests in ensuring safety, since they have at most asymptotic safety guarantees. In our approach, we design a controller that maximizes robustness over mode uncertainty without sampling scenarios.
Several safety techniques have been proposed for MPC, including barrier approaches \cite{prajna2004safety, ames2014control, mazouz2022safety}, computation of reachable sets \cite{althoff2011reachable} and discrete approximations \cite{girard2006efficient}. Particularly, work \cite{cosner2023robust} leverages CBFs to generate policies that consider uncertainty in dynamics. In this paper, we utilize CBFs in MPC and extend it to the case of JMLS mode uncertainty.


Feasibility of MPC problems involving CBF constraints has been studied and enhanced in recent work \cite{zeng2021safety, zeng2021enhancing, ma2021feasibility, wabersich2022predictive}. \cite{zeng2021safety, zeng2021enhancing} enhance feasibility by reducing the CBF decay rate for single mode systems. \cite{ma2021feasibility} increase feasibility by using Generalized CBFs for inconsecutive timesteps. While this increases feasibility, safety is no longer enforced for intermediate timesteps. Another approach is using soft constraints via slack variables, compromising on safety guarantees \cite{wabersich2022predictive}. In this work, we discuss the impact of the size of the consensus horizon on the feasibility of the JMLS MPC problem with CBF constraints, and propose a method to improve feasibility while maximizing safety for each planning horizon.


\subsection{Contribution}

In this work, we present an MPC design that encodes the safety of JMLS with uncertain mode jumps and mode switch detection delays using CBFs. We provide a feasibility and safety analysis of the size of the consensus horizon, and propose a method to obtain the optimal consensus horizon while ensuring the finite-horizon optimization problem is feasible. We evaluate our control design on two case study scenarios. Results indicate that the proposed technique of maximizing the robustness horizon guided by the feasibility of the optimization problem and the use of CBFs are critical for the overall safety and performance of MPC for JMLS.



In summary, the main contributions of this paper are four-fold: (i) A safety-aware control design for controlling JMLS using CBF constraints, (ii) An analysis of the role of the consensus horizon on feasibility and safety of the closed-loop system, (iii) A practical method to obtain the optimal consensus horizon that maximizes robustness while ensuring problem feasibility at each time step, and (iv) Simulation case studies and benchmarks that demonstrate the safety and performance of the proposed approach.







\section{System Setup}
\label{sec:problem formulation}
In this work, we aim to develop safety-critical control policies for multiple-model systems subject to mode switches.
Consider a discrete-time Jump Markov Linear System
\begin{align}
    \label{sys:linear}
    \begin{split}
        \pX_{t+1} &= A_{i(t)}\pX_t + B_{i(t)} \pU_t,\\
        \pY_t &= H_{i(t)}\pX_t + \pv_t,
    \end{split}
\end{align}
\noindent where $t\in \naturals$ and system state $\pX_{t} \in \reals^n$. The control input $\pU_{t} \in \mathcal{U} \subset \reals^m $, where $\mathcal{U}$ is a compact set. The dynamics and control matrices of the $i(t)^{\text{th}}$ mode are $A_{i(t)} \in \reals^{n \times n}$ and $B_{i(t)} \in \reals^{n \times m}$, respectively. Measurement $\pY_{t}\in \reals^p$ is governed by $H_{i(t)} \in \reals^{p \times n}$, and the additive noise term $\pv_t\in \reals^p$.
The index $i(t)\in \{1, \dots, M\}$ represents the discrete \textit{mode} at a given time $t$, of which evolution is governed by a state and input independent finite state Markov chain%
%
\begin{equation}
\label{eq:markov}
    \mu_{t+1} = \Omega\mu_t,
\end{equation}
\noindent where $\Omega = \{\omega_{ij}\} \in [0,1]^{M\times M}$ defines the \textit{mode transition matrix} and $\mu_t\in [0,1]^M$ defines the \textit{categorical mode probability} at time $t$.


Due to noisy measurements, the system's state is estimated by a hybrid state estimator, such as an Interacting Multiple Model (IMM) filter \cite{Blom1988} which generates a mode probability distribution $\hat{\mu}_t$ and a mean continuous state $\hat{\pX}_t$ at each timestep $t$ from the measurements. Then, starting from an initial state $x_{0}$ and mode $i(0) \sim \mu_0$, under an output-feedback controller $\pi$, the evolution of the discrete-time closed-loop system is
\begin{align}
    \label{sys:clsystem}
    \begin{split}
        \pX_{t+1} &= A_{i(t)}\pX_t + B_{i(t)} \pi(\mathbf{Y}_t),\\
        \pY_t &= H_i\pX_t + \pv_t,\\
        i(t) &\sim \mu_t,\\
        \mu_{t+1} &= \Omega_{i(t)},
    \end{split}
\end{align}
where $\mathbf{Y}_t = \{\pY_0, \ldots, \pY_t\}$ and $\Omega_{i(t)}$ refers to the $i(t)^{\text{th}}$ column of the mode transition matrix.

In many safety-critical systems, the mode transition matrix $\Omega$ is an approximation \cite{Shalom2002}. As such, the state estimator may have inaccurate mode probability distributions, or require additional time to detect mode switches when they occur. In this work, we aim to develop a safety-aware controller $\pi$ robust to the inaccuracy of the mode estimate.



\section{Safety-aware Trajectory Planner Design}

The safety-aware trajectory planning problem consists of synthesizing a controller for a robot subject to uncertain mode switches. This section discusses our approach to designing a controller for this problem. The approach uses MPC with control barrier functions and a novel variable consensus horizon that balances between maximizing robustness over modes and feasibility of the optimization problem.

\subsection{Model Predictive Control}
To guarantee safety for the entire problem duration, a planner would have to reason about all possible sequences of states and modes across the planning horizon. This presents a significant computational challenge, rendering most problems computationally infeasible. However, open-loop planning can often provide a satisfactory approximation of an optimal closed-loop plan while greatly enhancing computational efficiency \cite{koch_dmu,erez2013realtime}. To this effect, we adopt the receding horizon paradigm of MPC.

We provide a base template MPC formulation for JMLS that we build upon in the following sections. At each time step, the Model Predictive Controller plans over a finite horizon $H$. We represent the stage cost as $q(\pX_{k}^i,\pU_k,\hat{\mu}_k)$ and the terminal cost as $p(\pX_{H}^i,\hat{\mu}_k)$, where $\pX_k^i$ represents the state for mode $i(k)$ at planning time $k$. The system is subject to the continuous dynamics in System \eqref{sys:linear} with mode-switching dynamics \eqref{eq:markov}, initial state $x_t$, and initial mode probability $\hat{\mu}_t$. The safety criteria, e.g. "avoid obstacle collisions" is represented by constraining the state for each mode $\pX_k^i$ to remain within a designer-specified set $\mathcal{S}_k$. In our approach, we utilize CBFs for $\mathcal{S}_k$. Finally, under some control constraints, the goal is to generate a control sequence that minimizes the cost function subject to the constraints. This yields the following optimal control problem (OCP):
\begin{subequations}
    \label{problem:safe}
    \begin{align}
        \pbigU_t^* := \argmin_{\pbigX_t, \pbigU_t} & 
        \quad p(\pX_{H}^{1:M},\hat{\mu}_k) + \sum_{k=0}^{H-1} q(\pX_{k}^{1:M},\pU_k^{1:M},\hat{\mu}_k)\\
        \text {s.t.} \quad & \forall k \in \{0, \dots, H-1\}, i \in \{1, \dots, M\},
        \notag\\
        & \pX^i_{k+1} = A_{i}\pX^i_k + B_{i} \pU_k^i, \label{optim:1eq1} \\ 
        & \pX^i_{k+1} \in \mathcal{S}_k \label{optim:safe},\\
        & \pX^i_0 = x_t, \\
        & \pU^i_k \in \mathcal{U}, \\
        & \hat{\mu}_{k} = \hat{\mu}_{t},  \label{optim:1eq5}
    \end{align}
\end{subequations}
where $\pbigX_t = \{\{\pX_k^i\}^M_{i=1}\}_{k=0}^{H-1}, \pbigU_t = \{\{\pU_k^i\}^M_{i=1}\}_{k=0}^{H-1}$ and the mode probability $\hat{\mu}$ is assumed to remain constant over $H$. We assume that $p$ and $q$ are quadratic cost functions, making \eqref{problem:safe} a Quadratic Program. 


At each instance of Problem \eqref{problem:safe}, a sequence of open-loop control inputs is generated, and only the first input is applied to System \eqref{sys:linear}. After each control input is applied, the system generates a new hybrid state observation. Then, the state is updated and Problem~\eqref{problem:safe} is solved again. This recursive process allows the closed-loop controller $\pi$ to overcome the inaccuracies of neglecting the acquisition of future state information due to open-loop planning. 

In this current formulation, each mode has a corresponding control input trajectory that may be unique with respect to the other trajectories. Thus, $\pi$ must somehow choose between these trajectories to execute the first control input. In Section~\ref{sec: mode consensus}, we discuss how to account for uncertainty in the modes and mode switches to make this decision.



\subsection{Control Barrier Functions}
CBFs provide a framework of safety by creating a link between the dynamical system \eqref{sys:linear} and a designer's expressed safety criteria \eqref{optim:safe}, guaranteeing the satisfaction of these criteria \cite{xiao2023control}. For JMLS, we leverage CBFs as state constraints of \eqref{optim:safe} to provide robustness against mode uncertainty. We first introduce our notion of safety for closed-loop JMLS using the concept of set invariance \cite{xiao2023control}.
\begin{definition}[Forward Invariance \& Safety]
    A compact set $\mathcal{C} \subset \reals^{n}$ is forward invariant for closed-loop System \eqref{sys:clsystem} if $\mathbf{x}_0 \in \mathcal{C}$ implies $\mathbf{x}_t \in \mathcal{C}$ for all $t \in \naturals$. In this case, we call System \eqref{sys:clsystem} \textit{safe} with respect to set $\mathcal{C}$. 
\end{definition}

\noindent Additionally, we say $\pi$ is \emph{safe} if the closed-loop system under $\pi$ is safe with respect to $\mathcal{C}$. Formally, we enforce the notion of safety using CBFs for discrete systems \cite{agrawal2017discrete}. 
To that extent, let set $\mathcal{C}$ be described by the 0-superlevel set of a continuously
differentiable function $\beta: \mathbb{R}^{n} \rightarrow \mathbb{R}$, such that
\begin{align}
\begin{split}
      \mathcal{C} & = \{ \mathbf{x} \in \reals^{n} \: | \:  \beta(\mathbf{x}) \geq 0 \}, \\
    \partial \mathcal{C} & =  \{ \mathbf{x} \in \reals^{n} \: | \:  \beta(\mathbf{x}) = 0 \}. \\
\end{split}
\end{align}
Function $\beta$ is a CBF if $\frac{\partial \beta}{\partial \mathbf{x}} \neq 0, \forall {x} \in \partial \mathcal{C}$, and there exists an extended class $\mathcal{K}_{\infty}$ function $\gamma$, such that for discrete-time System \eqref{sys:linear}, $\beta$ satisfies for a given time $k$
\begin{equation}
   \exists \, \mathbf{u}_k \in \mathcal{U} \text{ s.t. } \quad \Delta{\beta}(\pX_k,\pU_k)\geq-\gamma(\beta(\pX_k)),
   \label{eq:discretecbf}
\end{equation}
where $  \Delta{\beta}(\pX_k,\pU_k) := \beta(\pX_{k+1}) - \beta(\pX_k)$. 
For simplicity, we will use the scalar form of $\gamma$ throughout the manuscript: $0 < \gamma \leq 1$. From \eqref{eq:discretecbf} we see that $1 - \gamma$ defines a lower bound on the rate of exponential decrease of $\beta(\pX)$, i.e.,
$\beta(\pX_{k+1}) \geq (1-\gamma)\beta(\pX_k)$.

For our System described in \eqref{sys:linear}, the safe set is the set of states satisfying \eqref{eq:discretecbf} at each time $k$,
\begin{equation}
\label{eq:safe_set}
    \mathcal{S}_k = \{\pX_{t+k} \in \mathbb{R}^{n}: \beta(\pX_{t+k}) \geq (1-\gamma)\beta(\pX_{t+k-1})\}.
\end{equation}
Note, $\mathcal{S}_k$ is implicitly generated from the safe control input in \eqref{eq:discretecbf}. Since our problem is characterized by mode uncertainty due to imperfect modeling and estimation, we first represent safety with respect to each mode. When planning over $M$ modes, the mode-specific set is, $\forall i\in\{1,...,M\}$,
\begin{equation}
    \mathcal{S}_k^i = \{\pX_{t+k}^i \in \mathbb{R}^{n}: \beta(\pX_{t+k}^i) \geq (1-\gamma)\beta(\pX_{t+k-1}^i)\}.
    \label{eq:modesafeset}
\end{equation}
Eq. \eqref{optim:safe} requires the controller to be safe for each mode, separately. In the next section, we show how to enforce finite horizon safety across all modes using control consensus.

\subsection{Robustness via Control Consensus across Modes}
\label{sec: mode consensus}
The optimal solution to \eqref{problem:safe} contains the optimal trajectory for each mode $i$. When there is high uncertainty over modes, or when detection of mode switching is inaccurate, executing the control input for one mode optimistically, based on maximum likelihood, may lead to unsafe behavior if the mode is estimated incorrectly. A safer and more conservative approach to tackle mode uncertainty is by enforcing the control inputs of the modes to be the same across all modes for some number of time steps during planning \cite{dyro2021particle}. This number of steps is called the \emph{consensus horizon}.


\begin{definition}[Consensus Horizon]
    A consensus horizon $h \in \{0, \dots, H - 1\}$ is a horizon up to which the control input is constrained to be the same for all modes during planning, i.e., $\pU^i_k = \pU^j_k \quad \forall i,j \in \{1, \dots, M\}, \forall k\in\{0, \dots, h\}$.
\end{definition}

Thus, Problem \eqref{problem:safe} requires a constraint specifying that all modes share the same control input for $h$ planning time steps. This yields the updated formulation:
\begin{subequations}
    \label{problem:consensus}
    \begin{align}
       \pbigU_t^* := \argmin_{\pbigX_t, \pbigU_t} & 
        \quad p(\pX_{H}^{1:M},\hat{\mu}_k) + \sum_{k=0}^{H-1} q(\pX_{k}^{1:M},\pU^{1:M}_k,\hat{\mu}_k)\\
        \text {s.t.} \quad & \forall k \in \{0, \dots, H-1\}, i,j \in \{1, \dots, M\},
        \notag\\
        & \pX^i_{k+1} = A_{i}\pX^i_k + B_{i} \pU_k^i, \label{optim:9eq1} \\ 
        & \pX^i_{k+1} \in \mathcal{S}_k^i,\label{prob1:safe}\\
        & \pX^i_0 = x_t, \\
        & \pU^i_k \in \mathcal{U}, \\
        & \hat{\mu}_{k} = \hat{\mu}_{t},  \label{optim:9eq5} \\
        \quad & \forall l \in \{0, \dots, h-1\}, \notag \\
        & \pU^i_l = \pU^j_l \label{prob1:consensus constraints}.
    \end{align}
\end{subequations}



The interaction between mode-dependent CBF constraints \eqref{prob1:safe} and consensus horizon control constraints \eqref{prob1:consensus constraints} leads to robustness over all modes over the consensus horizon.




With \eqref{problem:consensus} now parametrized with a consensus horizon $h$, we are faced with the following question: \textit{what should $h$ be}?
The typical approach for multi-modal MPC is to use a user-defined fixed consensus horizon $h$ \cite{dyro2021particle}. One common approach is to set $h=1$, for one-step consensus, resembling hindsight optimization techniques. This approach reasons through a certainty equivalent lens that assumes the robot will resolve its uncertainty after the first timestep. Another common approach is to set $h=H$, for full consensus, corresponding to robust control methods. Conversely, such an approach plans under the assumption that the uncertainty will not be resolved throughout the planning horizon. Nonetheless, selecting $h$ is challenging due to the difficulty in estimating how quickly the mode can be determined during execution. Substantial trial and error has to be conducted to choose a sufficient consensus horizon a priori. Additionally, as we discuss in the following, a fixed a priori chosen consensus horizon may lead to safety and feasibility issues during closed-loop execution.

\subsection{Feasibility Analysis}
In this section, we discuss the potential safety and feasibility drawbacks of planning with a pre-defined consensus horizon. Specifically, we show that too large $h$ may lead to infeasible OCP while too small $h$ may have safety issues. First, note that a larger consensus horizon leads to a smaller feasible set. This is formalized in the following proposition.


\begin{proposition}
    \label{prop:consensus}
    Let $F_h$ be the feasible set for \eqref{problem:consensus} with a consensus horizon $h$. Given two consensus horizons $h_1 < h_2 \leq H$, $F_{h_1} \supseteq F_{h_2}$. 
    
\end{proposition}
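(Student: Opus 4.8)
The plan is to show that any feasible point for the problem with the larger consensus horizon $h_2$ is automatically feasible for the problem with $h_1$, which gives the inclusion $F_{h_1} \supseteq F_{h_2}$. The two optimization problems \eqref{problem:consensus} differ only in the consensus constraints \eqref{prob1:consensus constraints}: the $h_2$-problem imposes $\pU^i_l = \pU^j_l$ for all $l \in \{0, \dots, h_2 - 1\}$, while the $h_1$-problem imposes the same equalities only for $l \in \{0, \dots, h_1 - 1\}$. Since $h_1 < h_2$, the index set $\{0, \dots, h_1 - 1\}$ is a subset of $\{0, \dots, h_2 - 1\}$, so the constraint set of the $h_1$-problem is a strict subset of the constraint set of the $h_2$-problem. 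All the remaining constraints --- the dynamics \eqref{optim:9eq1}, the mode-wise CBF safety constraints \eqref{prob1:safe}, the initial condition, the input constraints, and the fixed mode probability \eqref{optim:9eq5} --- are identical in both problems and do not depend on $h$.

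The key steps, in order, are: (i) fix an arbitrary $(\pbigX_t, \pbigU_t) \in F_{h_2}$; (ii) observe that it satisfies \eqref{optim:9eq1}--\eqref{optim:9eq5}, which are common to both formulations; (iii) note that it satisfies $\pU^i_l = \pU^j_l$ for all $l \in \{0, \dots, h_2 - 1\}$ and hence, restricting to the smaller index range, for all $l \in \{0, \dots, h_1 - 1\}$; (iv) conclude that $(\pbigX_t, \pbigU_t)$ satisfies every constraint defining $F_{h_1}$, so $(\pbigX_t, \pbigU_t) \in F_{h_1}$. Since the point was arbitrary, $F_{h_2} \subseteq F_{h_1}$, i.e., $F_{h_1} \supseteq F_{h_2}$.

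There is essentially no analytical obstacle here: the result is a direct monotonicity statement following from the nesting of the consensus-constraint index sets, and the argument is purely set-theoretic. The only point requiring a little care is the bookkeeping of what "feasible set" means --- whether $F_h$ is taken as a set of initial states $x_t$ for which \eqref{problem:consensus} admits a feasible trajectory, or as the set of feasible decision variables $(\pbigX_t, \pbigU_t)$. In either reading the argument is the same: if one reads $F_h$ as a set of initial states, then for any $x_t \in F_{h_2}$ there exists a feasible $(\pbigX_t, \pbigU_t)$ for the $h_2$-problem, and by the restriction argument above this same tuple witnesses feasibility of the $h_1$-problem, so $x_t \in F_{h_1}$. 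I would state explicitly which convention is in force and then present the four-step inclusion argument; no nontrivial estimate or construction is needed.
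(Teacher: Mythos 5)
Your proposal is correct and follows essentially the same approach as the paper's proof: the paper likewise observes that increasing the consensus horizon only adds constraints (the consensus equalities for the extra indices), so the feasible set can only shrink. Your version is simply a more explicit element-wise write-up of that same constraint-nesting argument.
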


\begin{proof}
    The feasible set of the optimization problem is the intersection of the feasible set for each independent constraint. A larger consensus horizon strictly adds constraints to the optimization problem. Therefore, the feasible set for $h_2$ cannot be a superset of the feasible set for $h_1$.
\end{proof}




Increasing the consensus horizon may shrink the feasible set and may lead to an infeasible optimization problem (if the feasible set becomes empty). Thus, we need to ensure that \eqref{problem:consensus} is parameterized with a \textit{feasible consensus horizon}.

\begin{definition}[Feasible Consensus Horizon]
    A feasible consensus horizon $h$ is a consensus horizon for which \eqref{problem:consensus} is feasible.
\end{definition}

On the other hand, a lower consensus horizon may be unsafe during closed-loop execution.


\begin{proposition}
    \label{theorem1}
    Let $\pi_A$ be a safe controller that generates a feasible sequence of control inputs that repeatedly solves \eqref{problem:consensus} with a consensus horizon $h_A = H$ for $t \in \{1,...,\infty\}$. Let controller $\pi_B$ repeatedly solve \eqref{problem:consensus} with a consensus horizon $h_B < H$. $\pi_B$ may be an unsafe controller.
\end{proposition}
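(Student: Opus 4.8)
Since the proposition only asserts that $\pi_B$ \emph{may} be unsafe, the plan is to establish it by a counterexample: one JMLS, one CBF and safe set $\mathcal{C}$, one initial condition, and one realization of the mode process together with a mode-detection delay, under which the closed loop driven by $\pi_B$ leaves $\mathcal{C}$ while the hypothesized full-consensus controller $\pi_A$ does not. The construction is guided by \emph{why} $h_A=H$ is benign: with full consensus the minimizer of \eqref{problem:consensus} is a \emph{single} input sequence $\pU_0,\dots,\pU_{H-1}$ that, through \eqref{prob1:safe}, keeps every mode CBF-safe over the whole horizon, so its one-step shift $\pU_1,\dots,\pU_{H-1}$ is already consensus-compatible and, using control invariance of the CBF set \eqref{eq:safe_set}, extends to a feasible plan at time $t+1$ for whichever mode actually occurred; this recursive feasibility, together with the decay bound $\beta(\pX_{t+1})\geq(1-\gamma)\beta(\pX_t)$, is exactly what makes $\pi_A$ safe, and it is the property that $h_B<H$ breaks.

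When $h_B<H$, the OCP constrains each mode after the common prefix only through a \emph{mode-specific} tail; in particular it never requires the state reached at the end of the prefix to again admit a \emph{consensus}-feasible continuation for the true mode. Hence, since $\hat{\mu}$ enters only the cost, a detection delay (admissible under the ``inaccurate $\Omega$'' assumption of Section~\ref{sec:problem formulation}) biases the applied consensus inputs toward the wrongly-estimated maximum-likelihood mode while each other mode's safety is deferred to a tail that is never executed. I would realize this with $M=2$, a low-dimensional state, $\beta(\pX)=d-e^{\top}\pX$ encoding a wall, a compact $\mathcal{U}$, and mode matrices in which mode~$2$ (the rotor-fault mode of Example~\ref{ex:mineshaft inspection}) has reduced or sign-reversed control authority along $e$. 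I would then tune $A_i,B_i,\mathcal{U},\gamma,H$ so that (i) a common input keeps $\beta$ nondecreasing in both modes from every state in $\mathcal{C}$ --- which makes \eqref{problem:consensus} with $h=H$ recursively feasible and its closed loop safe, so the hypothesized $\pi_A$ genuinely exists --- while (ii) with $h=h_B$ the cost-optimal consensus input at each step merely satisfies mode~$2$'s one-step CBF constraint with equality. Letting the true mode be $2$ from time $t$ on with a detection delay, the actual $\beta(\pX_\cdot)$ then decreases geometrically while the mode-$1$-driven plans steer toward the wall; by Proposition~\ref{prop:consensus} the feasible set of $\pi_B$'s problem is contained in that of the $h=1$ problem and, along this trajectory, shrinks to empty at some $t'>t$ (no single input meets both modes' CBF constraints from the near-wall state once mode~$2$'s required brake and mode~$1$'s admissible inputs become disjoint in $\mathcal{U}$). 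At $t'$, $\pi_B$ has no input satisfying \eqref{prob1:safe}, and in the constructed instance any admissible continuation --- e.g., the shifted maximum-likelihood-mode input --- applied under mode~$2$ yields $\beta(\pX_{t'+1})<0$, i.e., $\pX_{t'+1}\notin\mathcal{C}$, proving the claim.

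The main obstacle is item (i): I must certify that $\pi_A$ is not merely feasible at the initial state but \emph{recursively} feasible and safe on the chosen instance, since otherwise the proposition would be vacuous. I would discharge this by choosing the dynamics so that a genuinely mode-independent ``hold/brake'' input makes $\beta$ nondecreasing in both modes whenever $\pX\in\mathcal{C}$ --- trivializing the $h=H$ recursion via a constant warm start --- while keeping that same input costly enough that the $h_B<H$ optimizer, under the delayed estimate, avoids it. Everything else --- propagating the two linear recursions to locate $\pX_{t'}$, checking emptiness of the $h_B$-feasible set there, and verifying the one-step exit of $\mathcal{C}$ --- is routine and can be carried out with small explicit numbers.
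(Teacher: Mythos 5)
Your overall strategy---proving the existential claim by exhibiting a concrete JMLS instance on which $\pi_B$ leaves $\mathcal{C}$ while a safe $\pi_A$ exists---is legitimate and genuinely different from the paper, which gives only an abstract possibility argument: by Proposition~\ref{prop:consensus} the $h_B$-feasible set strictly contains the $h_A=H$-feasible set, so $\pi_B$'s optimizer may select a point outside it, drive the state out of $\pi_A$'s reachable tube, and land at a state from which \eqref{problem:consensus} is infeasible. However, the specific way you propose to discharge your ``main obstacle'' (i) is self-defeating. If there is a mode-independent brake input that keeps $\beta$ nondecreasing \emph{from every state in $\mathcal{C}$} under every mode, then the constant brake sequence satisfies \eqref{prob1:safe} and \eqref{prob1:consensus constraints} for \emph{every} consensus horizon $h$, so the OCP is feasible at every state of $\mathcal{C}$ for $\pi_B$ as well. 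Moreover, for any $h_B\geq 1$ the executed first input is shared across modes and must satisfy every mode's one-step CBF constraint $\beta(A_i x_t + B_i\pU_0)\geq(1-\gamma)\beta(x_t)$; hence whenever the OCP is feasible at the current state, the realized next state lies in $\mathcal{C}$ regardless of the true mode. Combining the two facts, $\pi_B$ with $h_B\geq 1$ is provably safe on your instance by induction: the only route to unsafety for $h_B\geq 1$ is an \emph{infeasibility} event at a visited state followed by an unsafe fallback, and your universal brake precludes infeasibility everywhere in $\mathcal{C}$. This directly contradicts the later step of your argument where you need the $h_B$-feasible set to ``shrink to empty at some $t'$.'' (Your construction does work for $h_B=0$, i.e.\ the non-robust planner, but the proposition is meant to cover arbitrary $h_B<H$.)

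The fix is to weaken (i) to what the hypothesis actually requires: $\pi_A$ need only be recursively feasible \emph{along its own closed-loop trajectory}, not from every state of $\mathcal{C}$. Concretely, you want an instance in which $\pi_A$'s full-consensus plan brakes early and keeps the state in a region where consensus inputs exist, while $\pi_B$'s optimizer---exploiting the strictly larger feasible set of Proposition~\ref{prop:consensus} and the reference-tracking cost---steers the state into a corner of $\mathcal{C}$ where mode~$1$'s and mode~$2$'s one-step CBF constraints have disjoint admissible input sets, so that \eqref{problem:consensus} is infeasible there for every $h\geq 1$ and every admissible fallback exits $\mathcal{C}$ under the true mode. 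That is exactly the mechanism the paper's sketch describes (the state escaping $\pi_A$'s reachable set); your second paragraph already contains it, but your third paragraph's global brake erases it. A secondary point: your geometric-decay observation ($\beta$ contracting by $1-\gamma$ per step) never by itself produces $\beta<0$, so the exit from $\mathcal{C}$ must indeed be routed through the infeasibility event---make sure the final construction verifies that event rather than assuming it. As written, the proposal also stops at a blueprint (no explicit matrices or numbers), so even after the fix the witness still needs to be exhibited.
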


\begin{proof}[Proof Sketch]
    Let $F_{h_{A,1}}$ and $F_{h_{B,1}}$ be the feasible set at $t=1$ for $\pi_A$ and $\pi_B$, respectively. From Proposition~\ref{prop:consensus}, $F_{h_{B,1}} \supseteq F_{h_{A,1}}$.

    Let $(\pbigX^*, \pbigU^*)$ be the optimal state and control inputs output by $\pi_B$ at $t = 1$. Since $F_{h_{B,1}} \supseteq F_{h_{A,1}}$, it is possible that $(\pbigX^*, \pbigU^*) \notin F_{h_{A,1}}$. That is, $(\pbigX^*, \pbigU^*)$ is not a feasible decision set for the horizon $h_{A,1}$. Therefore, when the first control input $\pU_1$ is executed for $\pi_B$, the system may reach a state $\pX_{B,2}$ that is not contained within the reachable set of states for $\pi_A$ at $t = 2$, i.e.
    \begin{align*}
        &\pX_{B,2} \in\\
        &\{\pX: \pX_2 =  A_{i}\pX_1 + B_{i} \pU_1, \pU_1 \in F_{h_{B,1}}, i \in \{1, \ldots, M\}\} 
        \\&\supseteq \{\pX: \pX_2 = A_{i}\pX_1 + B_{i} \pU_1, \pU_1 \in F_{h_{A,1}}, i \in \{1, \ldots, M\}\}.
    \end{align*}
    
    Then, at $t=2$, if $\pX_{B,2}$ is not in the reachable set of states for $\pi_A$, there exists scenarios in which the system starting from $\pX_{B,2}$ may be unsafe. In the worst case scenario, the optimization problem \eqref{problem:consensus} with $\pX_{B,2}$ may immediately be infeasible since the previous optimization at $t = 1$ did not account for the control consensus constraints for planning time steps $k = h_{B}$ to $k = h_{B} + 1$, leading to an empty feasible set for the optimization problem \eqref{problem:consensus} starting from $\pX_{B,2}$, whereas $\pX_{A, 2} \in \{\pX_2: \pX_2 = A_{i}\pX_1 + B_{i} \pU_1, \pU_1 \in F_{h_{A,1}}, i \in \{1, \ldots, M\}\}$ with $h_A = H$ is safe.
    
\end{proof}


Therefore, if a controller that maximizes the consensus horizon is safe, a controller operating with a lower consensus horizon is not necessarily safe. Propositions~\ref{prop:consensus} and ~\ref{theorem1} exemplify the dilemma in pre-determining a consensus horizon.

Consider Example~\ref{ex:mineshaft inspection} again. If the hexacopter adopts a longer consensus horizon and is close to an obstacle, there may be no single sequence of control inputs that allows it to avoid a collision with the mineshaft walls across all possible mode dynamics. In this case, the OCP becomes infeasible. However, reducing the consensus horizon may expand the feasible set and render the OCP feasible. On the other hand, when the problem is already feasible, we generally want the consensus horizon to be as large as possible to be more robust to mode uncertainty. A pre-defined fixed consensus horizon is unable to achieve such a behavior.

\subsection{Adaptive Consensus Horizon Planning}
\label{sec:meth}





Planning with a pre-defined fixed consensus horizon has drawbacks in feasibility (if consensus horizon is too large) and safety (if consensus horizon is too small). This implies that there exists a possibly varying optimal sequence of consensus horizons that is best able to balance between robustness and feasibility to guarantee the safety of the robot. However, since the controller uses a finite planning horizon in a receding horizon MPC, computing the optimal consensus horizon sequence beforehand is not generally possible; Designing a recursively feasible MPC controller is an open area of research \cite{Borrelli_Bemporad_Morari_2017}. Therefore, using the results of Proposition~\ref{prop:consensus} and~\ref{theorem1} as a heuristic, we propose a \emph{greedy} controller design. At each time step, we select a consensus horizon that is as large as possible while maintaining feasibility.






\begin{algorithm}[t!]
    \caption{Maximally Feasible Consensus \\
    Horizon Binary Search}
    \label{alg:binsearch}
    \SetAlgoLined
    \SetKwInOut{Input}{Input}\SetKwInOut{Output}{Output}
    \Input{$x,H$}
    \Output{$h^*$}
    $h_{max} \gets H-1$ \\
    $h_{min}, h_{best} \gets 0$ \\
    \While{$h_{min} < h_{max}$}
    {
        $h \gets \lfloor(h_{min} + h_{max})/2\rfloor$ \\
        \eIf{$\feasible(h)$}
        {
            $h_{min} \gets h + 1 $ \\
            $h_{best} \gets h$
        }
        {
            $h_{max} \gets h - 1$\\
        }
    }
    \eIf{$\feasible(h_{max})$}
    {
        $h \gets h_{max}$ \\
    }
    {
        $h \gets h_{best}$ \\
    }
    \Return $h^*=h$
    \rule{6cm}{0.6pt}\\
    \begin{itemize}
        \item[] \hspace{-10mm} where  
	    \item $\feasible(h)$: returns a boolean expressing the feasibility of \eqref{eq:maxh} with consensus horizon $h$.
	\end{itemize}
\end{algorithm}





In the first stage of each planning step, we compute the consensus horizon that maximizes $h$ such that the OCP is still feasible. We define such a consensus horizon as a \emph{maximally feasible consensus horizon}.



\begin{definition}[Maximally Feasible Consensus Horizon]
A maximally feasible consensus horizon $h^*$ is a feasible consensus horizon in which all horizons below $h^*$ are feasible and all horizons above $h^*$ are infeasible, i.e.,
\begin{align}
    \begin{split}
        \label{eq:maxh}
        h^* := \argmax_{{\pX^i_k}, {\pU^i_k}, h} & \quad h\\
        \text {s.t.} \quad & \text{\eqref{problem:consensus} is feasible.}
    \end{split}
\end{align}
\end{definition}


To compute $h^*$ efficiently, we first see that the feasibility of consensus horizons is an ordered monotonic sequence. That is, if $h_i$ is a feasible consensus horizon, $h_j < h_i$ is also a feasible consensus horizon for all $j \in \{0, \dots, i-1\}$. Using this insight, we propose to solve \eqref{eq:maxh} through a search over an ordered value\footnote{This search procedure bears similarities to the bisection method for quasiconvex optimization~\cite{boyd2004convex}.}. This can be efficiently done using, e.g., a binary search over the possible consensus horizon values $\{1,...,H-1\}$ as shown in Algorithm \ref{alg:binsearch}. For a given $h$, we construct an optimal control problem identical to \eqref{problem:consensus} with the objective function \eqref{optim:1eq1} replaced with an arbitrary constant. During each search iteration, we can tractably determine if a given $h$ is a feasible consensus horizon by solving the newly constructed problem parametrized with $h$ and obtaining a feasibility certificate. This reduces the optimization program from a Quadratic Program to an Linear Program. After computing $h^*$, we solve \eqref{problem:consensus} by setting the consensus horizon at that time step to $h=h^*$.

\subsection{Computational Complexity}
\label{sec:complexity}

In comparison to single-mode MPC OCP \cite{nan2022nonlinear}, each control input call from our approach is more computationally intensive in the following ways. First, a JMLS inherently admits a larger optimization problem, as there are more decision variables and constraints. The number of decision variables increases linearly in the number of modes, and the number of constraints increases linearly in the number of modes and consensus horizon. That is, a JMLS MPC OCP has $O(M\cdot T\cdot H \cdot m \cdot n)$ decision variables, compared to $O(T \cdot m \cdot n)$ of single-mode MPC OCP. Second, as compared to using a fixed consensus horizon, computing $h^*$ in Alg.~\ref{alg:binsearch} computes the feasibility of varying consensus horizon $\log_2 H$ number of times. Let $\textsc{T}$ be the maximum time taken to solve an OCP with a fixed $h$. Then, an upper bound on the time taken by our approach to solve \eqref{problem:consensus} is $(\lceil \log_2 H \rceil + 1)\textsc{T}$. However, each feasibility check is a Linear Program with a constant objective function, which can generally be solved faster than finding an optimal solution to the full OCP.

This added computation may lead to a slower solution output, which may lead to performance issues for on-board computation, depending on the size of the state space, number of system modes, and the solution time requirements of the system. If the ability to run in high frequency, e.g. real-time, is an important consideration, a practical solution is to start from $h=1$ and increment until $h^*$ is found or the planning time limit is reached. Another solution is to solve \eqref{problem:consensus} with $h=1$ as a backup controller in case a solution is not found in the planning time limit. Further, the search for $h^*$ can be parallelized, mitigating the added computation from maximizing the feasible consensus horizon.




\section{Simulation Evaluation}

In this section, we evaluate our approach on two robotic systems in simulation: a 6-dimensional spacecraft conducting orbital rendezvous and a 12-dimensional hexacopter system subject to rotor faults. We demonstrate the performance of our approach against several baselines. Simulations are implemented in Julia, and the quadratic programs are solved using the interior-point solver Clarabel.jl \cite{Clarabel}. The optimizations are all performed single-threaded on a computer with a nominally 3.7 GHz CPU. The computational performance of the algorithms is discussed at the end of this section.

For both domains, we represent the cost as a quadratic stage-wise function that includes a state reference error term $\|(\pX_{k}^i-\pX_{r})\|^2_Q$ and a penalty on the control effort $\|{\pU_k}\|^2_R$, where $Q, R \succ 0$:
\begin{align}
    \sum\limits_{i=0}^{M} \mu_k(i) \left[\sum_{k=0}^{H-1} (\|(\pX_{k}^i-\pX_{r})\|^2_Q + \|{\pU_k}\|^2_R)\right].
\end{align}
For simplicity, the terminal cost is set to zero. We weight the stage-wise cost per mode by its corresponding probability.

The purpose of our evaluation is to assess the ability of our approach to plan robustly with respect to mode uncertainty and estimator inaccuracy. To this end, at every time step, each system is given a categorical mode probability estimate and mean state estimate by an oracle filter. In each simulation, a mode switch is induced at a time unknown to the controller. Additionally, the oracle filter undergoes a specified delay, also unknown to the controller, in detecting when a mode switch occurs. We compare our approach vs. baselines across a range of mode-switch and estimation-delay times in addition to problem-specific parameters discussed below.

We compare our approach to the following baselines:
\begin{itemize}
    \item \textbf{First-Step Consensus}: Use consensus horizon $h = 1$. 
    \item \textbf{Full-Step Consensus}: Use consensus horizon $h = H$. 
    \item \textbf{Non-Robust}: No consensus constraints. Instead, plan for the maximum likelihood mode according to the categorical mode probability $\hat{\mu}$ given by the oracle filter,
\end{itemize}



\subsection{Case Studies}

\subsubsection{Spacecraft Orbital Rendezvous}
We first demonstrate our approach on a spacecraft rendezvous involving a chaser and target satellite. Following \cite{7798765}, the linearized system dynamics are characterized by the following Clohessy-Wiltshire-Hill equations: 
\begin{equation}
\label{eq:cwh}
    \Dot{\pX} = \begin{bmatrix}
                     & \mathbf{0_3} &  & & I_3 & \\
                    3n^2 & 0 & 0 & 0 & 2n & 0\\
                    0 & 0 & 0 & -2n & 0 & 0\\
                    0 & 0 & -n^2 & 0 & 0 & 0\\
                \end{bmatrix} \pX
                + \begin{bmatrix}
                    \mathbf{0_3} \\
                    I_3\\
                \end{bmatrix} \pU
\end{equation}

where the state $\pX=[x_r, y_r, z_r, \Dot{x_r}, \Dot{y_r}, \Dot{z_r}]$ represents the relative positions and velocities between the chaser and target satellites, the control input $\pU=[F_x, F_y, F_z]$ represents the thrust of the chaser satellite in the $x,y$ and $z$ directions. The parameter $n=\sqrt{\mu/a^3}$ is the mean motion of the target satellite, where $\mu$ is Earth's gravitational constant and $a$ is length of the target satellite orbit's semi-major axis.

The chaser satellite must maintain a safe relative distance to the target satellite defined the safe set $\safeset$: the intersection of the following halfspaces (in kilometers),

\begin{align}
    \begin{split}
        \beta_1(x) &= \{x: 6 \leq x_1 \leq -6\} \\
        \beta_2(x) &= \{x: 0 \leq x_2 \leq 4\} \\
        \beta_3(x) &= \{x: -10 \leq x_3 \leq 10\}
    \end{split}
\end{align}


We constructed the barriers with various values of $\gamma$ and found $\gamma = 0.05$ balances efficiency in reaching the objective with conservativeness with respect to the obstacles.

We model \eqref{eq:cwh} as a JMLS by constructing a 2-mode system parameterized by the target's nominal mean motion $n$ and off-nominal mean motion $n_p$. This models uncertainty in the chaser satellite's uncertainty over the mean motion of the target satellite due to, e.g., the target satellite operating under two possible orbit maneuver strategies. We compare our approach against the baselines across a range of off-nominal mean motion value, mode-switch times and delay times. We set the nominal mean motion $n=0.061$ and range the off-nominal mean motion $n_p$ from $0.041$ to $0.101$ with a step size of $0.01$. Additionally, we range the mode-switch times from $1$ to $40$ with a step size of $5$ and the delay times from $0$ to $5$ with a step size of $1$. Simulation parameters are shown in Tab.~\ref{table:params}.



\begin{table}[htbp]
\centering
\caption{Simulation Parameters}
\label{tab:simulation-parameters}
\scalebox{0.96}{
\begin{tabular}{@{}lll@{}}
\toprule
\textbf{Parameters} & \textbf{Orbital Rendezvous} &  \textbf{Mineshaft Inspection} \\
\midrule
Planning Horizon, $H$ & 300 s & 0.5 s \\
Sampling time-step, $\Delta t$ & 10 s & 0.05 s \\
Minimum control, $u_{\min}$& $-0.1$ N & 0.1 N \\
Maximum control, $u_{\max}$ & 0.1 N & 20 N \\
State weight matrix, $Q$ & {diag[$50_{3\times3}$, $10^{-1}_{3\times1}$]} & {diag[$50_{3\times3}$, $10^{-1}_{9\times1}$]} \\
Control weight matrix, $R$ & $0.01 \cdot \mathbf{I}_{3\times3}$ & $0.01 \cdot \mathbf{I}_{6\times6}$ \\
Starting position, $\mathbf{x_0}$ & $[0.01, 3.8, 0]^\top$ km & $[0, 0, 0]^\top$ m \\
Target location, $\mathbf{x}_\text{ref}$& $[1.0, 1.0, 0]^\top$ km & $[-0.7, 0.7, -5]^\top$ m \\
\bottomrule
\end{tabular}
}
\label{table:params}
\end{table}

\begin{table}[h]
\centering
\caption{Orbital Rendezvous simulation results of our approach against baselines. The best entries are bold.}
\begin{tabular}{|l|c|c|c|c|}
\hline
 & First-Step & Full-Step & Non-Robust & Ours \\ \hline
Total Trials & 336 & 336 & 336 & 336 \\ 
Successes & 188 & 230 & 65 & \textbf{336} \\ 
Success (\%) & 56.0 & 68.4 & 19.3 & \textbf{100.0} \\ 
Average Cost & 2.05 & 2.29 & 2.54 & \textbf{1.95} \\ 
\hline
\end{tabular}
\label{tab:system_planner_comparison_rendezvous}
\end{table}

\begin{figure}[!t]
  \centering
    \includegraphics[width=0.9\linewidth]{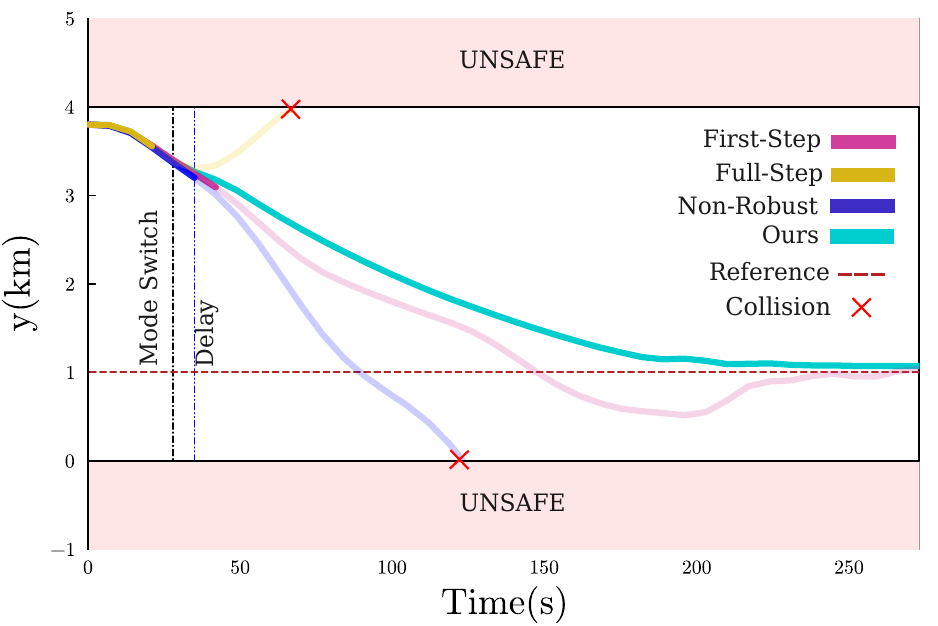}
    \caption{\small Spacecraft Orbital Rendezvous: example state trajectories for our approach and the baselines: First-Step Consensus, Full-Step Consensus, and Non-Robust. Only the in-track (y-direction) relative position is displayed to highlight the chaser spacecraft's (in)ability to stay behind the target satellite. The mode switch is induced at 35s and the estimation delay is 7s after. The simulation duration is 280s. Lower opacity illustrates problem infeasibility. When the problem is infeasible, the  first control input from the mode with the highest likelihood is chosen and executed.}
    \label{fig:rv_simulation}
    \vspace{-3mm}
\end{figure}

\subsubsection{Hazardous Mineshaft Inspection}
Next, we consider a safety-critical trajectory planning example whereby an autonomous hexacopter is tasked to inspect a hazardous mineshaft. The hexacopter must descend the mineshaft and reach a defined inspection point. At any time, the system can experience a complete rotor failure. Throughout its descent to the goal, the hexacopter must be safe with respect to the safe set $\safeset$, a three-dimensional rectangular polyhedron with no ceiling. Here $\safeset$ is defined as the intersection of the following halfspaces,

\begin{align}
    \begin{split}
        \beta_1(x) &= \{x: -1 \geq x_1 \leq 1\} \\
        \beta_2(x) &= \{x:  -1 \geq x_2 \leq 1\} \\
        \beta_3(x) &= \{x: x_3 \geq -6\}
    \end{split}
\end{align}

\noindent with $\gamma = 0.05$ for all barriers.



We consider a simplified representation of the hexacopter system, where two rotors are prone to failure at any time. Thus, the system has three modes, one for its nominal dynamics and two for the modes corresponding to the rotor failure. We compare our approach against the baselines across $9$ equidistant initial state conditions in the $xy$-plane representing the entrance of the mineshaft, over a range of mode-switch times (from $1$ to $40$) and delay times (from $0$ to $5$). Simulation parameters are shown in Tab.~\ref{table:params}.



\begin{table}[h!]
\centering
\caption{Mineshaft Inspection simulation results of our approach against baselines. The best entries are bold.}
\begin{tabular}{|l|c|c|c|c|}
\hline
 & First-Step & Full-Step & Non-Robust & Ours \\ \hline
Total Trials & 108 & 108 & 108 & 108 \\ 
Successes & 40 & 21 & 58 & \textbf{108} \\ 
Success (\%) & 37.0 & 19.4 & 53.7 & \textbf{100.0} \\ 
Average Cost & 5.81 & 4.94 & 4.74 & \textbf{3.92} \\ 
\hline
\end{tabular}
\label{tab:system_planner_comparison_hexacopter}
\end{table}

\begin{figure}[!t]
  \centering\includegraphics[width=0.95\linewidth]{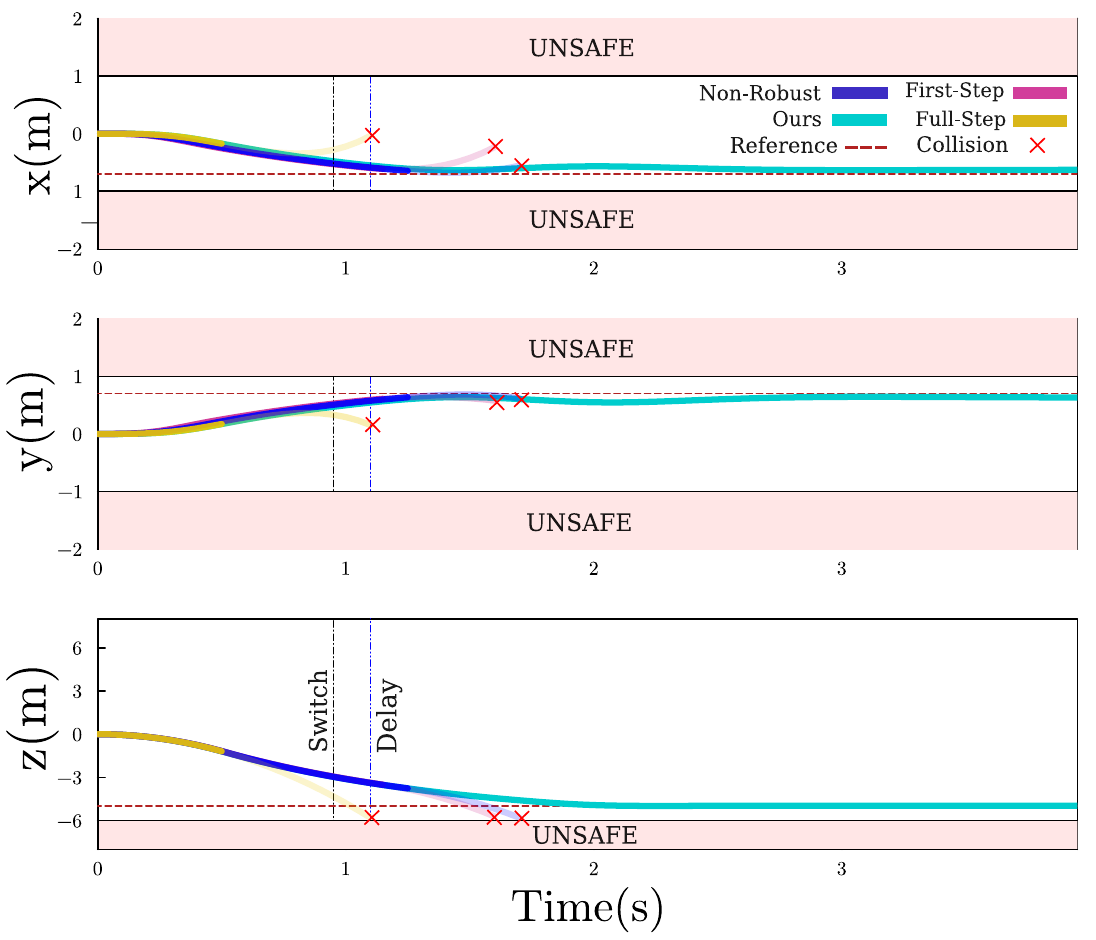}
    \caption{\small Hazardous Mineshaft Inspection: example state trajectories for our approach and the baselines: First-Step Consensus, Full-Step Consensus, and Non-Robust. The mode switch is induced at 0.9s and the estimation delay is 0.15s after. The simulation duration is 4 seconds. Lower opacity illustrates problem infeasibility. When the problem is infeasible, the first control input from the mode with the highest likelihood is chosen and executed.}
    \label{fig:hex_simulation}
    \vspace{-4mm}
\end{figure}

\subsection{Results Discussion}
Figs.~\ref{fig:rv_simulation} and ~\ref{fig:hex_simulation} depict illustrative example simulations of the compared approaches for the Spacecraft Orbital Rendezvous problem and Mineshaft Inspection problem, respectively. In Fig.~\ref{fig:rv_simulation}, the Full-Step consensus planner fails immediately after the mode switch is induced due to its inability to recourse after encountering infeasibility. Shortly after, the Non-Robust planner fails. The First-Step consensus planner also encounters infeasibility yet in this example, it manages to control the chaser satellite to the reference point, albeit at a higher average cost with respect to our planner. Meanwhile, our planner safely guides the spacecraft to the in-track (y-direction) reference point with the lowest average cost. In Fig.~\ref{fig:hex_simulation}, the Full-Step consensus planner fails first as well, this time followed by the First-Step consensus planner (1.1s) and finally the Non-Robust planner (1.25s). Meanwhile, our planner safely recovers the hexacopter after rotor failure and reaches the desired inspection reference point. Here, the discontinued $x$ and $y$ trajectories result from the colliding with the barrier in the $z$-direction.

Tab.~\ref{tab:system_planner_comparison_rendezvous} and Tab.~\ref{tab:system_planner_comparison_hexacopter} show the success rate and average cost of the compared planners for the Spacecraft Orbital Rendezvous problem and Mineshaft Inspection problem, respectively. The results show that our approach of adaptive consensus horizon planning is the safest and has the lowest cost across both problem environments. In the Spacecraft Orbital Rendezvous problem, the over-optimistic Non-Robust planner performs the worst with the highest average cost across all trials. The First-Step planner with $h = 1$ performs relatively less poorly, with lower cost and higher success rate. Nonetheless, its poor performance is attributed to its short consensus horizon causing feasibility issues when mode switch detections are delayed. The Full-Step planner with $h = H$ performs slightly better than both the First-Step and Non-Robust planners since the problem generally allows feasibility with a maximum consensus horizon. Our planner reaches the desired reference in $100\%$ of the trials, validating our approach of maximizing the consensus horizon subject to feasibility. In the Mineshaft Inspection problem, the Full-Step planner performs the worst out of all the planners in terms of success rate, while the First-Step planner obtains the highest cost. The Non-Robust planner performs better than the other baselines, but worse than our approach. In this problem, since the dynamics are significantly different between each mode and the safe sets are generally smaller, planning too optimistically (First-Step and Non-Robust) or too conservatively (Full-Step) leads to unsafe closed-loop behavior. Our approach is able to balance between robustness over mode uncertainty and problem feasibility, achieving a $100\%$ success rate with the lowest cost.

The results indicate that our approach of computing the maximally feasible consensus horizon allows more flexibility over different problem environments without needing to tune the consensus horizon for each new environment. The high success rate and low cost of our approach over both environments and varying parameters imply that the combination of the CBF constraints and the consensus horizon is critical for the overall safety and performance of MPC for JMLS.

Finally, Table~\ref{tab:algorithm-performance-data} shows the control input rate. Our controller optimizes for control inputs at an average rate of $9.89Hz$ for Orbital Rendezvous and $11.21Hz$ for Mineshaft Inspection. Our approach, although slower than the baselines, is able to maintain an operational control input computation rate while delivering a lower cost and higher success rate.

\begin{table}[h]
\centering
\caption{Control Input Computation Rate (Hz)}
\begin{tabular}{lcccc}
  \toprule
  & \multicolumn{2}{c}{\underline{Orbital Rendezvous}} & \multicolumn{2}{c}{\underline{Mineshaft Inspection}} \\
  Algorithm & Mean & Std. Dev. & Mean & Std. Dev. \\
  \midrule
  One-Step & 48.74 & 9.42 & 63.61 & 11.68 \\
  Full-Step & 44.37 & 10.78 & 67.08 & 15.29 \\
  Non-Robust & 53.97 & 6.51 & 66.16 & 10.88 \\
  Our Approach & 9.89 & 1.14 & 11.21 & 0.81 \\
  \bottomrule
\end{tabular}
\label{tab:algorithm-performance-data}
\end{table}

\section{Conclusion \& Future Work}

Our work presents an MPC framework that synthesizes safety-aware control policies for Jump Markov Linear Systems subject to stochastic mode switches. The MPC framework uses CBFs and tackles the robustness-feasibility tradeoff when asserting control consensus constraints over the planning horizon. Simulation experiments illustrate the utility of our proposed approach. For future work, we plan to extend the CBF constraints to include nonlinear safe sets and to account for noise in the dynamics. Additionally, we plan to explore techniques to address the added computation of computing the maximally feasible consensus horizon at each time step. Finally, we plan to validate our control design on hardware experiments. 



\section{Acknowledgments}
We thank Michael H. Lim for his insightful discussions.
This work was supported in part by NSF grant 203906.


\bibliographystyle{IEEEtran}
\bibliography{cite}

\begin{thebibliography}{10}
\providecommand{\url}[1]{#1}
\csname url@samestyle\endcsname
\providecommand{\newblock}{\relax}
\providecommand{\bibinfo}[2]{#2}
\providecommand{\BIBentrySTDinterwordspacing}{\spaceskip=0pt\relax}
\providecommand{\BIBentryALTinterwordstretchfactor}{4}
\providecommand{\BIBentryALTinterwordspacing}{\spaceskip=\fontdimen2\font plus
\BIBentryALTinterwordstretchfactor\fontdimen3\font minus
  \fontdimen4\font\relax}
\providecommand{\BIBforeignlanguage}[2]{{%
\expandafter\ifx\csname l@#1\endcsname\relax
\typeout{** WARNING: IEEEtran.bst: No hyphenation pattern has been}%
\typeout{** loaded for the language `#1'. Using the pattern for}%
\typeout{** the default language instead.}%
\else
\language=\csname l@#1\endcsname
\fi
#2}}
\providecommand{\BIBdecl}{\relax}
\BIBdecl

\bibitem{canis2015unmanned}
B.~Canis, ``{Unmanned Aircraft Systems (UAS): Commercial Outlook for a New
  Industry},'' 2015.

\bibitem{laurenti2020formal}
L.~Laurenti, M.~Lahijanian, A.~Abate, L.~Cardelli, and M.~Kwiatkowska,
  ``{Formal and Efficient Synthesis for Continuous-time Linear Stochastic
  Hybrid Processes},'' \emph{IEEE Transactions on Automatic Control}, vol.~66,
  no.~1, pp. 17--32, 2020.

\bibitem{Blom1988}
H.~A.~P. Blom and Y.~Bar-Shalom, ``{The Interacting Multiple Model Algorithm
  for Systems with Markovian Switching Coefficients},'' \emph{IEEE Transactions
  on Automatic Control}, vol.~33, no.~8, pp. 780--783, Aug. 1988.

\bibitem{bemporad2007robust}
A.~Bemporad and M.~Morari, ``Robust model predictive control: A survey,'' in
  \emph{Robustness in identification and control}.\hskip 1em plus 0.5em minus
  0.4em\relax Springer, 2007, pp. 207--226.

\bibitem{cassandra1998survey}
A.~R. Cassandra, ``{A Survey of POMDP Applications},'' in \emph{Working notes
  of AAAI 1998 fall symposium on planning with partially observable Markov
  decision processes}, vol. 1724, 1998.

\bibitem{koch_dmu}
M.~J. Kochenderfer, \emph{{Decision Making under Uncertainty: Theory and
  Application}}.\hskip 1em plus 0.5em minus 0.4em\relax MIT press, 2015.

\bibitem{sunberg2018online}
Z.~Sunberg and M.~Kochenderfer, ``{Online Algorithms for POMDPs with Continuous
  State, Action, and Observation Spaces},'' in \emph{Proceedings of the
  International Conference on Automated Planning and Scheduling}, vol.~28,
  2018, pp. 259--263.

\bibitem{lim2022generalized}
M.~H. Lim, T.~J. Becker, M.~J. Kochenderfer, C.~J. Tomlin, and Z.~N. Sunberg,
  ``Optimality guarantees for particle belief approximation of pomdps,''
  \emph{Journal of Artificial Intelligence Research}, vol.~77, pp. 1591--1636,
  2023.

\bibitem{gupta2022intention}
H.~Gupta, B.~Hayes, and Z.~Sunberg, ``Intention-aware navigation in crowds with
  extended-space pomdp planning,'' in \emph{Proceedings of the 21st
  International Conference on Autonomous Agents and Multiagent Systems}, 2022,
  p. 562–570.

\bibitem{micheli2022scenario}
F.~Micheli and J.~Lygeros, ``{Scenario-Based Stochastic MPC for Systems with
  Uncertain Dynamics},'' in \emph{2022 European Control Conference
  (ECC)}.\hskip 1em plus 0.5em minus 0.4em\relax IEEE, 2022, pp. 833--838.

\bibitem{prajna2004safety}
S.~Prajna and A.~Jadbabaie, ``{Safety Verification of Hybrid Systems using
  Barrier Certificates},'' in \emph{HSCC}.\hskip 1em plus 0.5em minus
  0.4em\relax Springer, 2004, pp. 477--492.

\bibitem{ames2014control}
A.~D. Ames, J.~W. Grizzle, and P.~Tabuada, ``{Control Barrier Function based
  Quadratic Programs with Application to Adaptive Cruise Control},'' in
  \emph{53rd IEEE Conference on Decision and Control}.\hskip 1em plus 0.5em
  minus 0.4em\relax IEEE, 2014, pp. 6271--6278.

\bibitem{mazouz2022safety}
R.~Mazouz, K.~Muvvala, A.~Ratheesh, L.~Laurenti, and M.~Lahijanian, ``{Safety
  Guarantees for Neural Network Dynamic Systems via Stochastic Barrier
  Functions},'' \emph{Advances in Neural Information Processing Systems}, 2022.

\bibitem{althoff2011reachable}
M.~Althoff, C.~Le~Guernic, and B.~H. Krogh, ``{Reachable Set Computation for
  Uncertain Time-Varying Linear Systems},'' in \emph{Proceedings of the 14th
  international conference on Hybrid systems: computation and control}, 2011,
  pp. 93--102.

\bibitem{girard2006efficient}
A.~Girard, C.~L. Guernic, and O.~Maler, ``{Efficient Computation of Reachable
  Sets of Linear Time-Invariant Systems with Inputs},'' in \emph{International
  workshop on hybrid systems: Computation and control}.\hskip 1em plus 0.5em
  minus 0.4em\relax Springer, 2006, pp. 257--271.

\bibitem{cosner2023robust}
R.~Cosner, P.~Culbertson, A.~Taylor, and A.~Ames, ``Robust safety under
  stochastic uncertainty with discrete-time control barrier functions,'' in
  \emph{Robotics: Science and Systems}, 2023.

\bibitem{zeng2021safety}
J.~Zeng, B.~Zhang, and K.~Sreenath, ``{Safety-Critical Model Predictive Control
  with Discrete-Time Control Barrier Function},'' in \emph{2021 American
  Control Conference (ACC)}.\hskip 1em plus 0.5em minus 0.4em\relax IEEE, 2021,
  pp. 3882--3889.

\bibitem{zeng2021enhancing}
J.~Zeng, Z.~Li, and K.~Sreenath, ``Enhancing feasibility and safety of
  nonlinear model predictive control with discrete-time control barrier
  functions,'' in \emph{2021 60th IEEE Conference on Decision and Control
  (CDC)}.\hskip 1em plus 0.5em minus 0.4em\relax IEEE, 2021, pp. 6137--6144.

\bibitem{ma2021feasibility}
H.~Ma, X.~Zhang, S.~E. Li, Z.~Lin, Y.~Lyu, and S.~Zheng, ``Feasibility
  enhancement of constrained receding horizon control using generalized control
  barrier function,'' in \emph{IEEE International Conference on Industrial
  Cyber-Physical Systems}, 2021, pp. 551--557.

\bibitem{wabersich2022predictive}
K.~P. Wabersich and M.~N. Zeilinger, ``{Predictive Control Barrier Functions:
  Enhanced Safety Mechanisms for Learning-Based Control},'' \emph{IEEE
  Transactions on Automatic Control}, 2022.

\bibitem{Shalom2002}
Y.~Bar-Shalom, T.~Kirubarajan, and X.-R. Li, \emph{{Estimation with
  Applications to Tracking and Navigation}}.\hskip 1em plus 0.5em minus
  0.4em\relax John Wiley \& Sons, 2002, ch.~11, pp. 453--457.

\bibitem{erez2013realtime}
T.~Erez, K.~Lowrey, Y.~Tassa, V.~Kumar, S.~Kolev, and E.~Todorov, ``An
  integrated system for real-time model predictive control of humanoid
  robots,'' in \emph{2013 13th IEEE-RAS International Conference on Humanoid
  Robots (Humanoids)}, 2013, pp. 292--299.

\bibitem{xiao2023control}
W.~Xiao, C.~G. Cassandras, and C.~Belta, \emph{Safe autonomy with control
  barrier functions: Theory and applications}.\hskip 1em plus 0.5em minus
  0.4em\relax Springer, 2023.

\bibitem{agrawal2017discrete}
A.~Agrawal and K.~Sreenath, ``Discrete control barrier functions for
  safety-critical control of discrete systems with application to bipedal robot
  navigation.'' in \emph{Robotics: Science and Systems}, vol.~13.\hskip 1em
  plus 0.5em minus 0.4em\relax Cambridge, MA, USA, 2017, pp. 1--10.

\bibitem{dyro2021particle}
R.~Dyro, J.~Harrison, A.~Sharma, and M.~Pavone, ``{Particle MPC for Uncertain
  and Learning-Based Control},'' in \emph{2021 IEEE/RSJ International
  Conference on Intelligent Robots and Systems (IROS)}.\hskip 1em plus 0.5em
  minus 0.4em\relax IEEE, 2021, pp. 7127--7134.

\bibitem{Borrelli_Bemporad_Morari_2017}
F.~Borrelli, A.~Bemporad, and M.~Morari, \emph{Predictive Control for Linear
  and Hybrid Systems}.\hskip 1em plus 0.5em minus 0.4em\relax Cambridge
  University Press, 2017.

\bibitem{boyd2004convex}
S.~Boyd, S.~P. Boyd, and L.~Vandenberghe, \emph{{Convex Optimization}}.\hskip
  1em plus 0.5em minus 0.4em\relax Cambridge university press, 2004.

\bibitem{nan2022nonlinear}
F.~Nan, S.~Sun, P.~Foehn, and D.~Scaramuzza, ``Nonlinear mpc for quadrotor
  fault-tolerant control,'' \emph{IEEE Robotics and Automation Letters},
  vol.~7, no.~2, pp. 5047--5054, 2022.

\bibitem{Clarabel}
P.~Goulart and Y.~Chen, ``Clarabel.jl: Interior point conic optimization for
  julia,'' \url{https://github.com/oxfordcontrol/Clarabel.jl}, 2022.

\bibitem{7798765}
C.~Jewison and R.~S. Erwin, ``A spacecraft benchmark problem for hybrid control
  and estimation,'' in \emph{2016 IEEE 55th Conference on Decision and Control
  (CDC)}, 2016, pp. 3300--3305.

\end{thebibliography}

\end{document}